\begin{document}

\title{$FM^2$: Field-matrixed Factorization Machines \\for Recommender Systems}


\author{Yang Sun}
\affiliation{%
  \institution{Yahoo Research}
  \streetaddress{701 First Ave.}
  \city{Sunnyvale}
  \state{CA}
  \country{USA}}
\email{yang.sun@verizonmedia.com}

\author{Junwei Pan}
\affiliation{%
  \institution{Yahoo Research}
  \streetaddress{701 First Ave.}
  \city{Sunnyvale}
  \state{CA}
  \country{USA}}
\email{pandevirus@gmail.com}

\author{Alex Zhang}
\affiliation{%
  \institution{Yahoo Research}
  \streetaddress{701 First Ave.}
  \city{Sunnyvale}
  \state{CA}
  \country{USA}}
\email{alex.zhang@verizonmedia.com}

\author{Aaron Flores}
\affiliation{%
  \institution{Yahoo Research}
  \streetaddress{701 First Ave.}
  \city{Sunnyvale}
  \state{CA}
  \country{USA}}
\email{aaron.flores@verizonmedia.com}



\begin{abstract}

Click-through rate (CTR) prediction plays a critical role in recommender systems and online advertising. The data used in these applications are multi-field categorical data, where each feature belongs to one field. Field information is proved to be important and there are several works considering fields in their models. In this paper, we proposed a novel approach to model the field information effectively and efficiently. The proposed approach is a direct improvement of FwFM, and is named as Field-matrixed Factorization Machines (FmFM, or $FM^2$). We also proposed a new explanation of FM and FwFM within the FmFM framework, and compared it with the FFM. Besides pruning the cross terms, our model supports field-specific variable dimensions of embedding vectors, which acts as a soft pruning. We also proposed an efficient way to minimize the dimension while keeping the model performance. The FmFM model can also be optimized further by caching the intermediate vectors, and it only takes thousands floating-point operations (FLOPs) to make a prediction. Our experiment results show that it can out-perform the FFM, which is more complex. The FmFM model's performance is also comparable to  DNN models which require much more FLOPs in runtime.

\end{abstract}

\begin{CCSXML}
<ccs2012>
 <concept>
  <concept_id>10010520.10010553.10010562</concept_id>
  <concept_desc>Computer systems organization~Embedded systems</concept_desc>
  <concept_significance>500</concept_significance>
 </concept>
</ccs2012>
\end{CCSXML}

\ccsdesc[100]{Recommender systems, Computational advertising}

\keywords{Recommender Systems, Factorization Machines, CTR prediction}

\maketitle

\section{Introduction}
Click-through rate (CTR) prediction plays a key role in recommender systems and online advertising, and it has attracted much research attention in the past decade~\cite{chapelle2015simple, mcmahan2013ad, richardson2007predicting, yang2011optimal, deng2017pricing, sun2012convergence}. The data involved in CTR prediction are typically \emph{multi-field categorical data}~\cite{zhang2016deep,pan2018fwfm}. Such data possess the following properties. First, all the features are categorical and are very sparse since many of them are identifiers. Therefore, the total number of features can easily reach millions to billions. Second, every feature belongs to one and only one field and there can be tens to hundreds of fields. 

A prominent model for these prediction problems is logistic regression with cross-features~\cite{chapelle2015simple}. When all cross-features are considered, the resulting model is equivalent to a polynomial kernel of degree 2~\cite{chang2010training}. However, it takes too many parameters to consider all possible cross-features. To resolve this issue, matrix factorization~\cite{koren2009matrix,aharon2013off} and factorization machines (FM)~\cite{rendle2010factorization,rendle2012factorization} was proposed to learn the effects of cross features by dot products of two feature embedding vectors. Based on FM, Field-aware Factorization Machines (FFM)~\cite{juan2016field,juan2017field} was proposed to consider the \emph{field} information to model the different interaction effects of features from different field pairs. Recently, a Field-weighted Factorization Machine (FwFM)~\cite{pan2018fwfm,pan2019predicting} model was proposed to consider the field information in a more parameter-efficient way.

Existing models that consider the field information either has too many parameters, such as FFM~\cite{juan2016field,juan2017field}, or is not very effective, such as~\cite{pan2018fwfm}.  
We propose to use a field matrix between two feature vectors to model their interactions, where the matrix is learned separately for each field pair. 
We will show that our field-pair matrix approach achieves good accuracy performance while maintaining computational space and time efficiency.

\section{Related Works Overview}
\label{sec:related_works}
\textbf{Logistic Regression (LR)} is the most widely used model on multi-field categorical data for CTR prediction~\cite{chapelle2015simple,richardson2007predicting}. Suppose there are $m$ unique features $\{f_1,\cdots,f_{m}\}$ and $n$ different fields $\{F_1,\cdots,F_{n}\}$. Each field may contain multiple features, while each feature belongs to only one field. To simplify the  notation, we use index $i$ to represent feature $f_i$, and $F(i)$ to represent the field which $f_i$ belongs to. Given a data set $\bm{S} = \{y^{(s)},\bm{x}^{(s)}\}$, where $y^{(s)}\in\{1,-1\}$ is the label (clicked or not) and $\bm{x}^{(s)} \in \{0,1\}^m$ is the feature vector in which $x_i^{(s)} = 1$ if feature $i$ is active for this instance otherwise $x_i^{(s)}=0$, the LR model parameters $\bm{w}$ are estimated by minimizing the following loss function:
\begin{equation}
	\min_{\bm{w}}\;[\sum_{s=1}^{|S|}\log(1+\exp(-y^{(s)} \Phi_{LR}(\bm{w}, \bm{x}^{(s)}))) + \lambda \|\bm{w}\|_2^2 ]
\end{equation}
The first term is the log loss, and the second term is the L2 regularization term where $\lambda$ is the regularization weight, and
\begin{equation}
\Phi_{LR}(\bm{w}, \bm{x}) = w_0 + \sum_{i=1}^m x_i w_i
\end{equation}
is a linear combination of individual features.

However, linear models lack the capability to represent the feature interactions \cite{chapelle2015simple}. As cross features may have more important factors than those single features, many improvements have been proposed in the past decades.

\textbf{Degree-2 Polynomial  (Poly2)} models as a general way to address this problem is to add feature conjunctions. It has been shown that Poly2 models can effectively capture the effect of feature interactions\cite{chang2010training}. Mathematically, in the loss function of equation (1), Poly2 models consider replacing $\Phi_{LR}$ with
\begin{equation}
\Phi_{Poly2}(\bm{w}, \bm{x}) = w_0 + \sum_{i=1}^m x_i w_i + \sum_{i=1}^{m}\sum_{j=i+1}^m x_{i}x_{j}w_{h(i,j)}
\end{equation}
where $h(i,j)$ is a function which hashes feature conjunction $(i,j)$ into a natural number in the hashing space $H$ to reduce the number of parameters. Otherwise the number of parameters in the model would be in the order of $O(m^2)$, which is too many to be learned.

\textbf{Factorization Machines(FM)} learn an embedding vector $\bm{v}_i \in \mathbb{R}^K$ for each feature, where $K$ is a hyper-parameter and is usually a small integer, e.g., 10. FM model the interaction between two features $i$ and $j$ as the dot product of their corresponding embedding vectors $\bm{v}_i$, $\bm{v}_j$:

\begin{equation}
\Phi_{FM}\left((\bm{w},\bm{v}),\bm{x} \right)=  w_0 + \sum_{i=1}^m x_i w_i +\sum_{i=1}^{m}\sum_{j=i+1}^m x_{i}x_{j} \langle \bm{v}_{i}, \bm{v}_{j}\rangle
\end{equation}

FM usually outperform Poly2 models in applications involving sparse data such as CTR prediction. This is because it models the interaction between two features by a dot product between their corresponding embedding vectors. These embedding vector of a feature is meaningful as long as the this feature appears enough times during model training. However, FM neglect the fact that a feature might behave differently when it interacts with features from different other fields.

\textbf{Field-aware Factorization Machines (FFM)} model such difference explicitly by learning $n-1$ embedding vectors for each feature, say $i$, and only using the corresponding one $\bm{v}_{i, F(j)}$ to interact with another feature $j$ from field $F(j)$:
\begin{equation}
	\Phi_{FFM}((\bm{w}, \bm{v}), \bm{x})= w_0 + \sum_{i=1}^m x_i w_i + \sum_{i=1}^{m}\sum_{j=i+1}^m x_{i}x_{j} \langle \bm{v}_{i, F(j)}, \bm{v}_{j, F(i)}\rangle 
\end{equation}

Although FFM have gotten significant performance improvements over FM, their number of parameters is in the order of $O(mnK)$. The huge number of parameters in FFM is undesirable in the real-world production systems~\cite{juan2017field}. Therefore, it is appealing to design alternative approaches that are competitive and more memory-efficient.

\textbf{Field-weighted Factorization Machines (FwFM)} was proposed in ~\cite{pan2018fwfm}, which models the different field interaction strength explicitly. More specifically, the interaction of a feature pair $i$ and $j$ in our proposed approach is modeled as

\[x_ix_j\langle \bm{v}_i, \bm{v}_j\rangle r_{F(i), F(j)}\]
\label{def:fwfm}

\noindent where  $\bm{v}_i, \bm{v}_j$ are the embedding vectors of $i$ and $j$, $F(i), F(j)$ are the fields of features $i$ and $j$, respectively, and $r_{F(i), F(j)} \in \mathbb{R}$ is a weight to model the interaction strength between fields $F(i)$ and $F(j)$. The formulation of FwFM is:

\begin{equation}\label{eq:fwfm}
	\Phi_{FwFM}((\bm{w},\bm{v}), \bm{x}) = w_0 + \sum_{i=1}^m x_i w_i +  \sum_{i=1}^m\sum_{j=i+1}^m x_{i} x_{j} \langle \bm{v}_i, \bm{v}_j \rangle r_{F(i), F(j)}
\end{equation}

FwFM are extensions of FM in the sense that it uses additional weight $r_{F(i), F(j)}$ to explicitly capture different interaction strengths of different field pairs. FFM can model this implicitly since they learn several embedding vectors for each feature $i$, each one $\bm{v}_{i, F_k}$ corresponds to one of other fields $F_k \neq F(i)$, to model its different interactions with features from different fields. However, the model complexity of FFM is significantly higher than that of FM and FwFM.

Recently, there are also lots of work on deep learning based click prediction models~\cite{cheng2016wide,zhang2016deep,qu2016product,guo2017deepfm,shan2016deep,he2017neural,wang2017deep,zhou2018deep,xdeepfm,autoint}. These models capture both low order and high order interactions and achieve significant performance improvement. However, the online inference complexity of these models is much higher than the shallow models~\cite{deng2020sparse}. Model compression techniques such as pruning~\cite{deng2020sparse}, distillation~\cite{zhou2018rocket} or quantization are usually needed to accelerate these models in the online inference. In this paper, we focus on improving the low order interactions, and the proposed model can be easily used as a shallow component in these deep learning models.

\section{Our Model}
We propose a new model to represent the interaction of field pairs as a matrix. Similar to FM and FwFM, we learn an embedding vector for each feature. We define a matrix $M_{F(i),F(j)}$ to represent the interaction between field $F(i)$ and field $F(j)$

\[x_ix_j\langle \bm{v}_i  M_{F(i), F(j)}, \bm{v}_j\rangle\]

\noindent where  $\bm{v}_i, \bm{v}_j$ are the embedding vectors of feature $i$ and $j$, $F(i), F(j)$ are the fields of feature $i$ and $j$, respectively, and $M_{F(i), F(j)} \in \mathbb{R}^{K\times K}$ is a matrix to model the interaction between field $F(i)$ and field $F(j)$. We name this model Field-matrixed Factorization Machines (FmFM):
\begin{equation}\label{eq:fmfm}
	\Phi_{FmFM}((\bm{w},\bm{v}), \bm{x}) = w_0 + \sum_{i=1}^m x_i w_i +  \sum_{i=1}^m\sum_{j=i+1}^m x_ix_j\langle \bm{v}_i  M_{F(i), F(j)}, \bm{v}_j\rangle\
\end{equation}

FmFM are extensions of FwFM in that it uses a 2-dimensional matrix $M_{F(i), F(j)}$  to interact different field pairs, instead of a scalar weight $r$ in FwFM. With those matrices, features from the embedding space can be transferred to $n-1$ spaces; we name those matrices Field-matrices. Figure \ref{fig:fmfm} demonstrates the calculation of the interaction pairs $(v_i, v_j)$ and $(v_i, v_k)$, while features $i, j$ and $k$ are from 3 different fields.

\begin{figure}[h]
  \centering
  \includegraphics[width=0.9\linewidth]{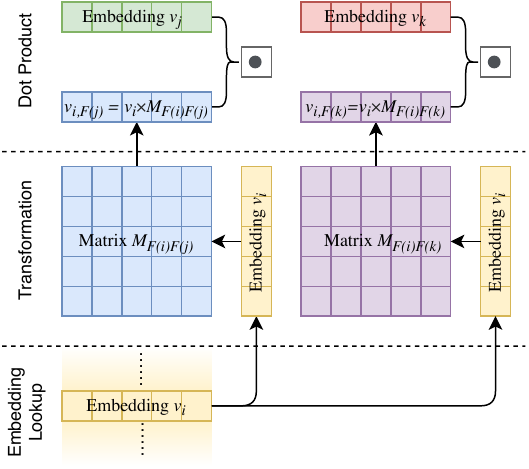}
  \caption{An example of FmFM interaction terms calculation}
  \label{fig:fmfm}
\end{figure}

The calculation can be decomposed into 3 steps:
\begin{enumerate}
\item \textbf{Embedding Lookup:} The feature embedding vectors $v_i, v_j$, and $v_k$ are looked up from the embedding table, and $v_i$ will be shared between those 2 pairs.

\item \textbf{Transformation:} Then $v_i$ is multiplied by the matrices $M_{F(i)F(j)}$ and $M_{F(i)F(k)}$ respectively, here we get the intermediate vector $v_{i,F(j)}=v_i\times M_{F(i)F(j)}$ for the field $F(j)$, and $v_{i,F(k)}=v_i\times M_{F(i)F(k)}$ for the field $F(k)$.

\item \textbf{Dot Product:} The final interaction terms will be a simple dot product between $v_j$ and $v_{i,F(j)}$, as well as  $v_k$ and $v_{i,F(k)}$, which are the black dots showed in Fig.\ref{fig:fmfm}.
\end{enumerate}

\subsection{The United Framework of Factorization Machines' Family}

FmFM have a similar design with, while extending, FM and FwFM; in this section, we deep dive into their design, explain their structure with the 3-step FmFM framework above, and figure out the fundamental relationships among these factorization machine models.

\subsubsection{FM}
Figure \ref{fig:fm} shows the  calculation of feature interactions in FM, the difference to FmFM is that FM skip the step 2, and use the shared embedding $v_i$ to do the final dot product with $v_j$ and $v_k$ respectively. Since we know
\[v_i=v_i I_K, \]
we can construct an identity matrix $I_K$ and let all field matrices equal to $I_K$. As the identity matrix shows in Fig.\ref{fig:fm}, the FM actually is a special case of FmFM when all field matrices are $I_K$. Since those matrices $I_K$ are fixed and non-trainable, we define its degree of freedom to be 0.

\begin{figure}[h]
  \centering
  \includegraphics[width=0.8\linewidth]{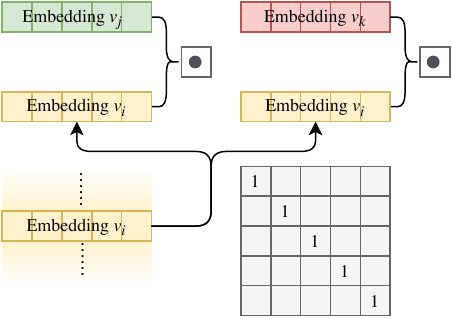}
  \caption{An explanation of FM with FmFM framework}
  \label{fig:fm}
\end{figure}

\subsubsection{FwFM}
Fig.\ref{fig:fwfm} shows the calculation of feature interactions in FwFM. There is a change from the original definition \ref{def:fwfm}, while, it is easy to know that:
\[\langle \bm{v}_i, \bm{v}_j\rangle r_{F(i), F(j)} = \langle \bm{v}_i r_{F(i), F(j)}, \bm{v}_j\rangle \]
Thus, we calculate the term $\bm{v}_i r_{F(i), F(j)}$ firstly in figure \ref{fig:fwfm}, instead of $\langle \bm{v}_i, \bm{v}_j\rangle$ in the original definition in Eq.\ref{def:fwfm}. It is clear now that the intermediate vector in step 2 is actually a scaled embedding vector:
\[v_{i,F(j)} = v_i r_{F(i)F(j)} = v_i (r_{F(i)F(j)} I_K)\]

Thus, we construct the field matrix in FwFM as a scalar matrix $r_{F(i)F(j)} I_K$, which is a diagonal matrix with all its main diagonal entries equal $r$. Its effect on the embedding vector $v_i$ is a scalar multiplication by $r$. We show this matrix at the corner of Fig.\ref{fig:fwfm} (left one). Since the scalar $r$ is trainable, it has one more degree of freedom than FM, we define its degree of freedom as 1.

\begin{figure}[h]
  \centering
  \includegraphics[width=0.9\linewidth]{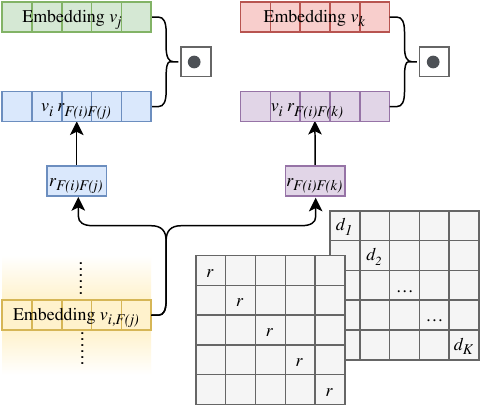}
  \caption{An explanation of FwFM and FvFM with FmFM framework}
  \label{fig:fwfm}
\end{figure}

\subsubsection{FvFM}
We follow the clue above, and give one more freedom to the field matrix in FwFM. Let the field matrix become a  diagonal matrix in which the main diagonal entries are trainable variables, instead of one single variable in FwFM, we can rewrite the intermediate vector:
\[v_{i,F(j)} = v_i D_{F(i)F(j)} = v_i \odot d_{F(i)F(j)},\]

where $D_{F(i)F(j)}=\mathrm{diag}(d_1, d_2,\dots, d_K)$,  this can be expressed more compactly by using a vector $d_{F(i)F(j)}\in \mathbb{R}^K$ instead of the diagonal matrix, and taking the Hadamard product ($\odot$) of the vectors $v_i$. Figure \ref{fig:fwfm} demonstrates this case in the right matrix at the corner.

We name this method \textbf{Field-vectorized Factorization Machines (FvFM)}. The FvFM have one more freedom than FwFM: the trainable parameters become a vector instead of a scalar; thus, we define its degree of freedom to be 2.

\subsubsection{FmFM}
Let's revisit FmFM in figure \ref{fig:fmfm}. It has all the degrees of freedom of a matrix, which is 3. 
All the variables in those matrices are trainable, and we expect the FmFM to have a greater predictive capacity than other factorization machine models. We will evaluate this hypothesis in the next section.

Overall, we have found that FM, FwFM, FvFM are all special cases of FmFM, the only differences are their field matrices' restrictions. According to their flexibility, we summarize them in the Table\ref{table:model_df}

\begin{table}[h]
\centering
	\begin{tabular}{| c | c | c |}
    \hline
    Model & Field Interaction & Degree of Freedom \\
    \hline
    FM & Constant & 0 \\
    \hline
    FwFM & Scalar & 1\\
    \hline
    FvFM & Vector & 2 \\
    \hline
    FmFM & Matrix & 3\\
    \hline
	\end{tabular}
    \caption{Degrees of freedom in different FM models}
    \label{table:model_df}
\end{table}

\subsubsection{Connections to OPNN}

FmFM can also be viewed as modeling the interaction of two feature embedding vectors by a weighted outer product: 

\begin{equation}\label{eq:fmfm}
	\Phi_{FmFM}((\bm{w},\bm{v}), \bm{x}) = w_0 + \sum_{i=1}^m x_i w_i +  \sum_{i=1}^m\sum_{j=i+1}^m x_ix_j  p(\bm{v}_i, \bm{v}_j, \bm{W}_{F(i), F(j)}) \
\end{equation}

where $\bm{W}_{F(i), F(j)}\in \mathbb{R}^{K \times K}$, and 

\begin{equation}
    p(\bm{v}_i, \bm{v}_j, \bm{W}_{F(i), F(j)}) = \sum_{k=1}^{K} \sum_{k'=1}^{K} v_{i}^{k} v_{j}^{k'} w_{F(i), F(j)}^{k, k'}
\end{equation}

OPNN also proposed to model the feature interactions via outer product. However, FmFM is different from OPNN in the following two aspects. First, FmFM is a simple shallow model without the fully connected layers as in~\cite{qu2016product}. We can use FmFM as a shallow component or a building block in any deep CTR models, like DeepFM~\cite{guo2017deepfm}. Second, we support field-specific variable embedding dimensions for features from different fields, which will be discussed in Section~\ref{sec:var_len}.

\subsection{FFM and FmFM, Memorization vs Inference}
\label{sec:ffm_vs_fmfm}
Unlike other factorization machines above, FFM cannot be reformed into the FmFM framework since it has a different way to look up their feature embeddings, we demonstrate its interaction terms' calculation in Figure \ref{fig:ffm}. FFM never share the feature embedding; it always looks up the field specific embeddings from the embedding tables. Thus, there are $n-1$ embeddings for one single feature, which are prepared to cross $n-1$ other fields respectively. Those field-specific embeddings will be learned independently during the training process, and there are no restrictions among those embeddings even belonging to the same feature.

\begin{figure}[h]
  \centering
  \includegraphics[width=0.8 \linewidth]{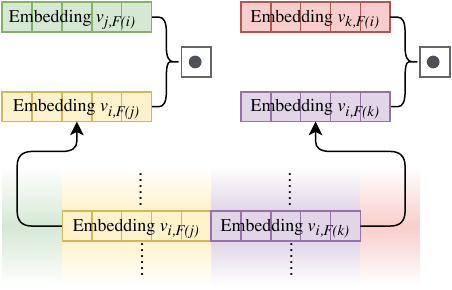}
  \caption{An example of FFM}
  \label{fig:ffm}
\end{figure}

This FFM mechanism gives the model maximal flexibility to fit the data, and the huge number of embedding parameters  also has incredible memorization capacity. Meanwhile, there is always a risk of over-fitting with it, even when there are billions of instances to be trained. The nature of the features' distribution is a long tail distribution, instead of a uniform distribution, that makes the feature pairs' distribution  even more imbalanced.

Given an example in Fig.\ref{fig:ffm}, assume that feature pair $(v_i, v_j)$ is a high frequency combination, while $(v_i, v_k)$ is a low frequency (possibly 0 frequency, or never appeared), since $v_{i, F(j)}$ and $v_{i, F(k)}$ are 2 independent embeddings in the setting of FFM, thus embedding $v_{i, F(j)}$ may be trained well but $v_{i, F(k)}$ may not. Due to the long tail distribution of features, those high frequent features pairs may dominate the number of training data, while other low frequency features which dominate the number of features, cannot be trained well.

FmFM use shared embedding vectors, as there is only one copy for each single feature. It utilizes a transformation process to project this single embedding vector into other $n-1$ fields. This is basically an inference process, and those $n-1$ vectors $v_{i,F(*)}$ are actually tied with the original embedding vector $v_i$. With those field matrices, the vectors are transformable forward and backward. That is the fundamental difference between FFM and FmFM; those transformable intermediate vectors within the same feature help the model learn those low frequency feature pairs well.

Back to the example in Fig.\ref{fig:fmfm}, even the feature pair $(v_i, v_k)$ is of low frequency, the feature embedding $v_i$  can still be trained well through the other  high frequency feature pairs like $(v_i, v_j)$, and the field matrix $M_{F(i),F(j)}$ can be trained well through other feature interactions between field $F(i)$ and field $F(k)$. Thus,  if the low frequency feature pair $(v_i, v_k)$ occurs during evaluation or test, the intermediate vector $v_{i,F(k)}$ can be inferred through $v_i M_{F(i),F(j)}$.

Despite this difference between FFM and FmFM, they have more in common. An interesting observation between figure \ref{fig:ffm} and figure \ref{fig:fmfm} is that, when all transformations are done, the FmFM model becomes a FFM model. We can cache those intermediate vectors and avoid matrix operations during runtime; the details will be discussed in the next section.

In contrast, FFM model cannot be reformed to a FmFM model, as we have mentioned above. Those $n-1$ field features embedding tables are independent, thus it is hard to compress them into one single feature embedding table and restore them when needed.

\subsection{Model Complexity}
\label{subsec:model_complexity}

The number of parameters in FM is $m + mK$, where $m$ accounts for the weights for each feature in the linear part $\{w_i|i=1,...,m\}$ and $mK$ accounts for the embedding vectors for all the features $\{\bm{v}_i|i=1,...,m\}$. FwFM use $n(n-1)/2$ additional parameters $\{r_{F_k, F_l} | k, l = 1,...,n\}$ for each field pair so that the total number of parameters of FwFM is $m + mK + n(n-1)/2$. The additional matrices in FmFM is $n(n-1)/2$ as compared to FM, and it has extra $\frac{n(n-1)}{2}K^2$ parameters.

For FFM, the number of parameters is $m + m(n-1)K$ since each feature has $n-1$ embedding vectors. Given that usually  $n\ll m$, the number of parameters  of other Factorization Machines are comparable with that of FM and significantly less than that of FFM. In Table~\ref{table:model_complexity} we compare the model complexity of all models mentioned so far. We also list the estimated number of parameters in the setting of section \ref{sec:exp} for each model, which use the public data set Criteo. Those numbers can give us an intuitive impression about the size of each model. FM, FwFM, and FmFM have similar sizes while FFM have more than dozen times than others.
 
\begin{table}[h]
\centering
	\begin{tabular}{| c | c | c |}
    \hline
    Model & N of Parameters & Estimated N in Criteo Dataset\\
    \hline
    LR & $m$ & 1.33M\\
    \hline
    Poly2 & $m + H$ & 45M\\
    \hline
    FM & $m + mK$ & 14.63M\\
    \hline
    FwFM & $m + mK + \frac{n(n-1)}{2}$ & 14.63M\\
    \hline
    FmFM & $m + mK + \frac{n(n-1)}{2}K^2$ & 14.63M\\
    \hline
    FFM & $ m + m(n-1)K$ & 859.18M\\
    \hline
	\end{tabular}
    \caption{A summary of model complexities (ignoring the bias term $w_0$). The estimate of the total $N$ of the model in the settings of Section \ref{subsec:performance_comparison} }
    \label{table:model_complexity}
\end{table}

\section{Model Optimization}
\label{sec:model_opt}
In this section we present our methodologies to optimize the FmFM model. There are 3 tactics which we can devise to reduce the complexity of FmFM further. In section \ref{sec:var_len} we introduce the field-specific embedding dimensions, which is a unique property in FmFM; it allows us to use field specific dimensions in the embedding table, instead of a fixed length $K$ globally. In Section \ref{sec:vec_cache}  we introduce the method to cache the intermediate vectors to avoid the matrix operations, which can reduce the FmFM model's computational complexity in runtime. In Section \ref{sec:linear_terms} we introduce the method to reduce the linear terms and replace them with field specific weights.

\subsection{Field-specific Embedding Dimensions}
\label{sec:var_len}
The main improvement of FM over LR  model is that, FM use the embedding vector to represent each feature. In order to make the dot product, it requires the vector dimension $K$ of all feature embeddings to be the same, even though features come from different fields. The improved models like FwFM, FvFM also adopt this property. The vector dimension matters both in model complexity and model performance, the work \cite{pan2018fwfm} discussed this trade-off between performance and time cost, but the vector dimension can only be optimized globally.

When we utilize the matrix multiplication in FmFM, it actually does not require the field matrices to be square matrices;  we can adjust the output vector length by changing the shape of the field matrix. This property gives us an another flexibility to set the field-specific lengths on-demand in the embedding table, as we show in figure \ref{fig:var_emb}. 

\begin{figure}[h]
  \centering
  \includegraphics[width=0.8\linewidth]{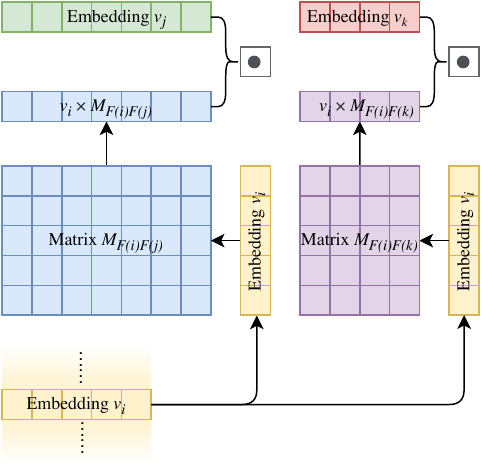}
  \caption{An Example of Variable Vector Length of Embeddings}
  \label{fig:var_emb}
\end{figure}

The dimension of an embedding vector determines the amount of information it can carry; this property allows us to accommodate the need for each field. For the example $(i,j)$ in Fig.\ref{fig:var_emb}, the field $user\_gender$ may contain only 3 values (male, female, other), while another field $top\_level\_domain$ may contain more than 1 million features. Thus, the embedding table of field $user\_gender$ may only need  5-dimension (5D), while the field $top\_level\_domain$ may need 7D. The field matrix $M$ should be set up with a shape in $(7,5)$. When we cross the feature between $top\_level\_domain$  and $user\_gender$, the matrix can transfer the 5D feature vector to a 7D vector, making it ready to do a dot product with the feature vector from field $top\_level\_domain$.

To optimize the field-specific embedding vector dimension without model performance loss, we propose a 2-pass method. In the first pass, we use a larger fixed embedding vector dimension for all fields, e.g. $K=16$, and train the FmFM as a full model. From the full model, we learn how much information (variance) in each field, then we utilize a standard PCA dimensionality reduction to the embedding table in each field. From the experiment in Section \ref{subsec:dim_opt} we found that the new dimension which contains 95\% original variance is a good trade-off. With this new field specific dimension setting, we train the model in a second pass from scratch, and should get the second model without any significant performance loss, compared to the first full model. 

\begin{table}[h]
\begin{tabular}{ccr}
\toprule
\makecell{Feature\\Field ID} & \makecell{Emb\\Dim} & \makecell{Feat. N\\ in Field}\\
\midrule
Field \#01 & 3 & 62 \\
Field \#02 & 8 & 113 \\
Field \#03 & 5 & 125 \\
Field \#04 & 7 & 50\\
Field \#05 & 9 & 223\\
Field \#06 & 8 & 147\\
Field \#07 & 6 & 99\\
Field \#08 & 5 & 78\\
Field \#09 & 8 & 103\\
Field \#10 & 3 & 8\\
Field \#11 & 5 & 31\\
Field \#12 & 3 & 56\\
Field \#13 & 6 & 81\\
Field \#14 & 8 & 1,457\\
Field \#15 & 12 & 555\\
Field \#16 & 2 & 245,195\\
Field \#17 & 11 & 166,164\\
Field \#18 & 5 & 305\\
Field \#19 & 4 & 18\\
Field \#20 & 14 & 12,054\\
\bottomrule
\end{tabular}
\begin{tabular}{ccr}
\toprule
\makecell{Feature\\Field ID} & \makecell{Emb\\Dim} & \makecell{Feat. N\\ in Field}\\
\midrule
Field \#21 & 8 & 633\\
Field \#22 & 2 & 3\\
Field \#23 & 13 & 46,329\\
Field \#24 & 14 & 5,228\\
Field \#25 & 8 & 243,452\\
Field \#26 & 13 & 3,176\\
Field \#27 & 4 & 26\\
Field \#28 & 14 & 11,744\\
Field \#29 & 10 & 225,320\\
Field \#30 & 6 & 10\\
Field \#31 & 14 & 4,726\\
Field \#32 & 12 & 2,056\\
Field \#33 & 2 & 3\\
Field \#34 & 9 & 238,638\\
Field \#35 & 4 & 16\\
Field \#36 & 6 & 15\\
Field \#37 & 12 & 67,854\\
Field \#38 & 7 & 87\\
Field \#39 & 11 & 50,940\\
\\
\bottomrule
\end{tabular}
\caption{The optimized dimensions for each field in Criteo data set, 95\% variance kept}
\label{table:field_dim}
\end{table}

Table \ref{table:field_dim} shows the optimized dimension for each field in the Criteo dataset, with the PCA method. This list shows that the range of those dimensions are huge which from 2 to 14, and most of the dimensions are less than 10. The average $\overline{K}$ is only 7.72, which is less than the optimal setting in the FwFM. With keeping most variance from the dataset, the lower average dimension means the model has fewer parameters, requires less memory.

\subsection{Intermediate Vectors Cache}
\label{sec:vec_cache}
FmFM is a lower complexity model than FFM, in the number of parameters; however it requires  expensive matrix operations in the transformation step. In table \ref{table:model_flop}, we list the number of Floating Point Operations (FLOPs) for each model\footnote{We use the following values to estimate the FLOPs: $n=39, K=16$, $H=200$ denotes the number of nodes in the hidden layers of DNN, $L=3$ denotes the number of hidden layers in DNN, $K'=32$ denotes the dimension of embedding vectors in the new space in AutoInt, and $s_{\text{FwFM}}=90\%, s_{\text{DNN}}=80\%$ denotes the sparsity rate of the FwFM and DNN component in DeepLight.}, and estimate it with typical settings.

\begin{table}[h]
\centering
	\begin{tabular}{| c | c | r |}
    \hline
    Model &  Floating Point Operations  & Estimated \#\\
    \hline
    LR & $O(n)$ & 78\\
    \hline
    Poly2 & $O(n^2)$ & 1,560\\
    \hline
    FM & $O(nK)$ & 1,920\\
    \hline
    FwFM & $O(n^2K)$ & 25,272\\
    \hline
    FFM & $O(n^2K)$ & 24,531\\
    \hline
    FmFM & $O(n^2K^2)$ & 403,923\\
    \hline
    FmFM(cached) & $O(n^2K)$ & 24,531\\
    \hline
    \textbf{\makecell{FmFM(cached\\ \& 95\% variance)}} & $O(n^2\overline{K})$ & \textbf{8,960}\\
    \hline
    \hline
    Wide \& Deep & $O(n^2 + nKH + LH^2)$ & \char`\~ 500,000 \\
    \hline
    Deep \& Cross & $O(n^2K^2 + nKH + LH^2)$ & \char`\~ 510,000 \\
    \hline
    DeepFM  & $O(nKH + LH^2 + n^2)$ & \char`\~ 246,000 \\
    \hline
    xDeepFM & $O(nH^2KL)$ & \char`\~ 150,000,000 \\
    \hline
    AutoInt & $O(nHK'(n+K))$ & \char`\~ 28,000,000 \\
    \hline
    FiBiNET & $O(n^2K^2 + n^2KH + LH^2)$ & \char`\~ 10,000,000 \\
    \hline
    DeepLight & \makecell{$O(n^2K (1-s_{\text{FwFM}}) +$ \\ $(nKH + LH^2) (1-s_{\text{DNN}}))$} & \char`\~ 102,000 \\
    \hline
	\end{tabular}
    \caption{A summary of Floating Point Operations by model}
    \label{table:model_flop}
\end{table}


Among those Factorization Machine models, the FmFM needs the most operations to accomplish its calculation, which is about $k$ times as FwFM and FFM, but still faster than most DNN models. In section \ref{sec:ffm_vs_fmfm}, we have shown that a FmFM model can be transformed into a FFM model, by caching all intermediate vectors. In this sense, we can reduce its number of operations to the same magnitude as FM and FFM, which is almost 20 times faster.


\subsection{Embedding Dimension and Cache Optimization Combined}
\label{sec:opt_combined}
When we combine the field-specific embedding dimensional optimization and the cache optimization, the inference speed can be much faster, and the required memory can be reduced significantly. This benefits from another property of FmFM - the interaction matrices are symmetrical, which means
\begin{equation}\label{equ:matrix_sym}
\langle \bm{v}_iM_{F(i)F(j)}, \bm{v}_j\rangle = \langle \bm{v}_jM^T_{F(i)F(j)}, \bm{v}_i\rangle
\end{equation}
We have a proof for this lemma in the Appendix.

Thus, we can choose to cache those intermediate vectors which have lower field dimensions, and dot-product with the other feature vectors during inference. For example, in the setting of table \ref{table:field_dim}, two features $v_i$ and $v_j$ are from field \#16 and \#28 respectively. With this property, when we calculate the interaction between $v_i$ and $v_j$, we can cache either $\bm{v}_iM_{16,28}$ or $\bm{v}_jM^T_{16,28}$.

Since the field matrix $M_{16,28}$ has a shape of $[2,14]$, the former one $\bm{v}_iM_{16,28}$ increased the dimension from 2 (field \#16) to 14 (intermediate vector), then dot-product with $\bm{v}_j$ whose dimension is also 14. It costs 14 units of memory for the intermediate vectors cache, and takes $2\times 14$ FLOPs during inference. By contrast, the latter one $\bm{v}_jM^T_{16,28}$ reduces the dimension from 14 (field \#28) to 2 (intermediate vector), then dot-product with $\bm{v}_i$ whose dimension is also 2. It costs 2 units of memory for the intermediate vectors cache, and takes $2\times 2$ FLOPs during inference. In this single field pair, the optimized cache with field-specific embedding dimensions can save 7 times memory and time  without any precision loss. 

With those two optimization techniques combined, the FmFM model's time complexity is reduced drastically; in table \ref{table:model_flop}, we estimate that the optimized model only takes 8,960 FLOPs, which is only about 1/3 of FFM. In the section\ref{subsec:dim_opt}, we will show that this optimized model can achieve the same performance as the full model.

\subsection{Soft Pruning}
The field-specific embedding dimensions also act in a role similar to pruning actually; while traditional pruning such as DeepLight~\cite{deng2020sparse} gives a binary decision to keep or drop a field or a field pair, the field-specific embedding dimensions give us a new way to determine the importance of each field and field pair on-demand, and assign each field a factor to represent its importance. For example, in the FmFM model of Table\ref{table:field_dim}, the cross field \#17 and \#20 is a high strength pair; it takes 11 units of cache and $2\times11$ FLOPs during inference; in contrast, a low strength pair, field \#18 and \#22, only takes 2 units of cache and $2\times2$ FLOPs. 

When we drop a field pair in the traditional pruning, its signal was lost totally; while in this method, a field pair still keeps the major information with minimal cost. It is a soft version of pruning, which is similar to Softmax. It is more efficient and sees less performance drop during soft pruning.

\begin{figure}[h]
  \centering
  \includegraphics[width=\linewidth]{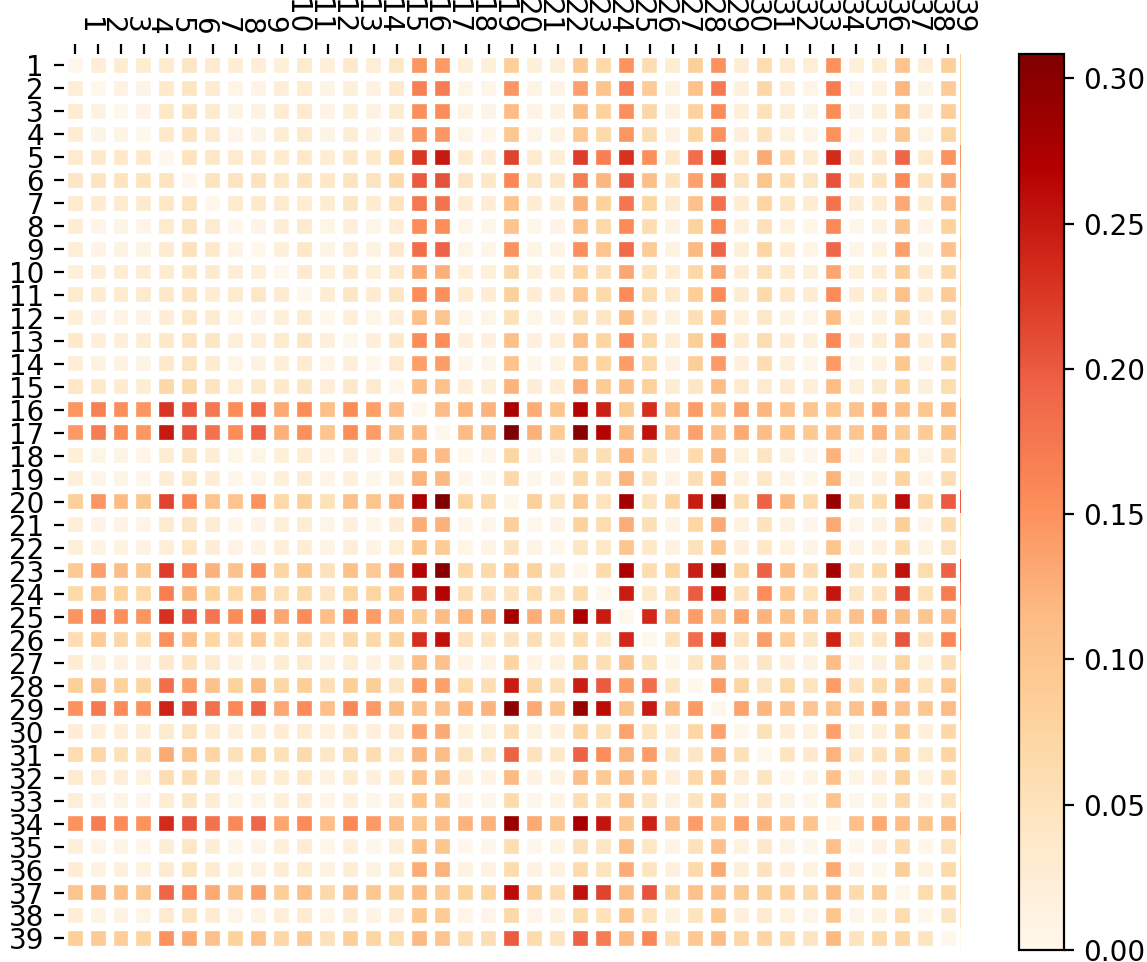}
  \caption{An example of Mutual Information Score between field pairs and label in Criteo Dataset}
  \label{fig:criteo_mi}
\end{figure}

Figure \ref{fig:criteo_mi} shows a heat-map of mutual information scores between field pairs and labels in the Criteo dataset, which represents the strength of field pairs in prediction. Figure \ref{fig:cross_dim} shows the cross field dimensions, which is the lower dimension between two fields (explanation in Section \ref{sec:opt_combined}); it represents the parameters and computational cost for each field pair. Obviously, these two heat-maps are highly related to each other, which means the optimized FmFM model allocates more parameters and more computation on those higher strength field pairs, and fewer parameters and less computation on lower strength field pairs.

\begin{figure}[h]
  \centering
  \includegraphics[width=\linewidth]{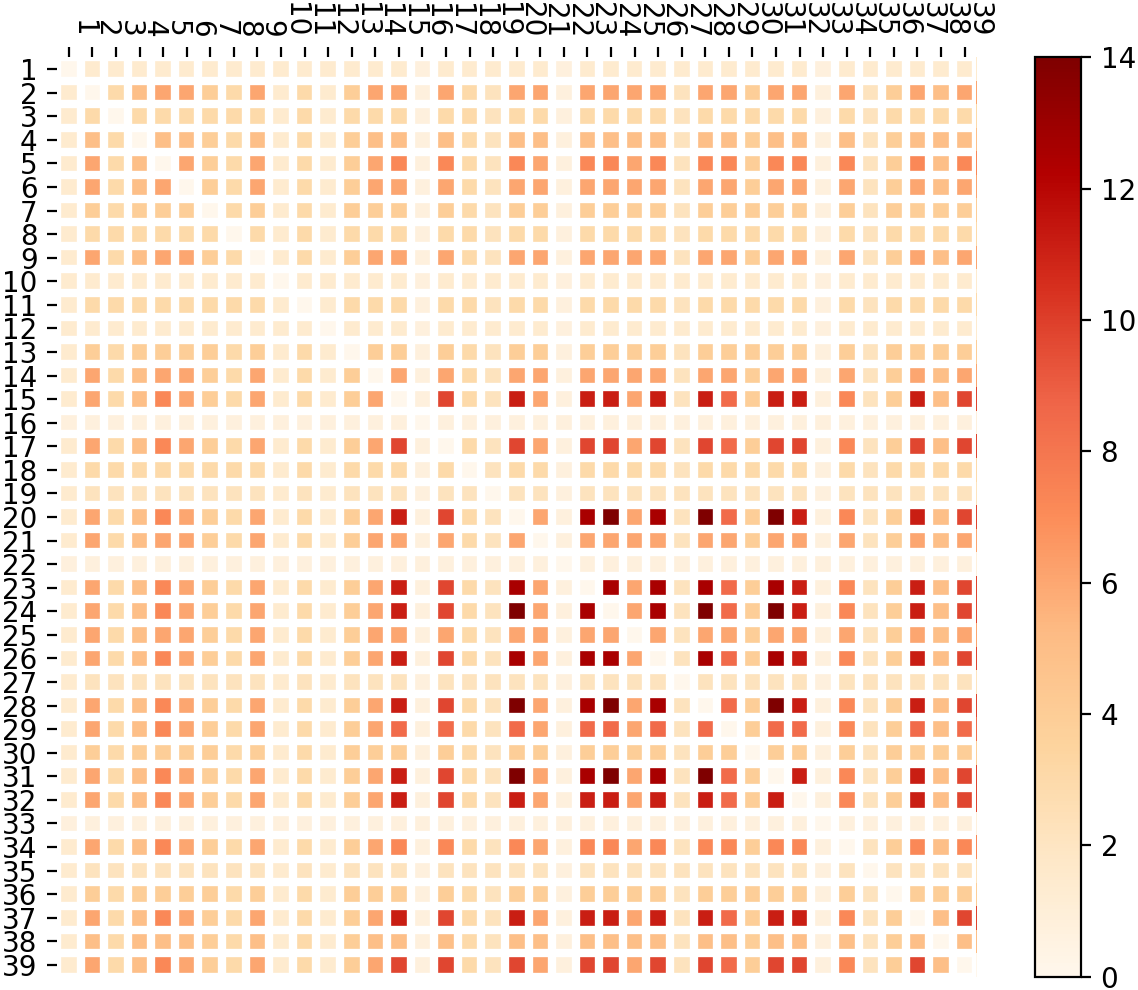}
  \caption{An example of cross fields dimensions - $min(D_i, D_j)$ in Criteo Dataset}
  \label{fig:cross_dim}
\end{figure}

\subsection{Linear Terms}
\label{sec:linear_terms}
There is a linear part in Eq.\ref{eq:fmfm}:

\begin{equation}
  \sum_{i=1}^m x_i w_i  
\end{equation}

which requires an extra scalar $w_i$ to be learned for each feature. However the learned embedding vector $v_i$ should contain more information, and the weight $w_i$ can be learned from $v_i$ by a simple dot product. Another benefit from the learned $v_i$ is that, it can help to learn the embedding vector from the linear part.

We follow the method from the work of \cite{pan2018fwfm}, by learning a field specific vector $w_{F(i)}$ so that all features from the same field $F(i)$ will share the same linear weight vector. Then the linear terms can be rewritten to:

\begin{equation}
    \sum_{i=1}^m x_i \langle \bm{v}_i,w_{F(i)} \rangle
\end{equation}

We apply this linear term optimization to FwFM, FvFM and FmFM by default in the rest of the paper.

\label{sec:cross_prun}

\section{Experiments}
\label{sec:exp}
In this section we present our experimental evaluation results. We will first describe the data sets and implementation details in Section~\ref{subsec:datasets} and ~\ref{subsec:implementation} respectively. In Section~\ref{subsec:performance_comparison} we compare FmFM with other baseline models like LR, FM, FwFM and FFM, as well as the state-of-the-art methods like Wide \& Deep, Deep \& Cross network, xDeepFM, AutoInt, FiBiNET and DeepLight. In Section \ref{subsec:dim_opt}, we did a few experiment on the Criteo dataset and observe the model performance change when we apply the field-specific dimension in embedding.

\subsection{Data sets}
\label{subsec:datasets}
We use 2 public data sets to evaluate our model performance: 

\begin{enumerate}
\item The first one is the Criteo CTR data set; it is a well-know benchmark data set which used for the Criteo Display Advertising Challenge~\cite{criteo-display-ad-challenge}. The Criteo data set is already label balanced, the positive to the negative ratio is about 1:3. There are 45 million samples and each sample has 13 numerical fields and 26 categorical fields.
\item The second data set is the Avazu CTR data set; it was used in the Avazu CTR prediction competition, which predicts whether a mobile ad will be clicked. The positive to the negative ratio in the Avazu data set is about 1:5. There are 40 million samples and each sample has 23 categorical fields.
\end{enumerate}

We follow those existing works ~\cite{guo2017deepfm,xdeepfm,shan2016deep,wang2017deep,zhang2016deep,zhou2018deep, deng2020sparse,autoint}, for each data set, we split it into 3 parts randomly, 80\% for training, 10\% for validation, and 10\% for testing. \footnote{Some of those works split 90\%:10\% for just training and testing, but we adopt the strict one with validation set and less training set.} 

Regarding to those numerical features in the Criteo data set, we adopt the log transformation of $log(x)^2$ if $x>2$, which proposed by the winner of the Criteo competition\footnote{https://www.csie.ntu.edu.tw/~r01922136/kaggle-2014-criteo.pdf} to normalize the numerical features. This method was also used by ~\cite{deng2020sparse} and ~\cite{autoint}. Regarding the date/hour feature in the Avazu data set, we transfer it into 2 features: day\_of\_week(0-6) and hours(0-23) to consume the feature better.

We also remove those infrequent features that are less than a threshold in both data sets and replace their values with the default "unknown" feature in that field. We set the threshold to 8 for the Criteo data set, and to 5 for the Avazu data set.

The statistics of the normalized data sets are shown in Table~\ref{table:datasets}.

\begin{table}[h]
\centering
    \begin{tabular}{ | c | c | r | c | c | c |} 
    \hline
    \multicolumn{2}{|c|}{Data set} & Samples & Fields & Features &  Pos:Neg \\
    \hline
    \multirow{3}{*}{Criteo} & Train & 36,672,493 & \multirow{3}{*}{39} & \multirow{3}{*}{1,327,180} & \multirow{3}{*}{\char`\~ 1:3}\\
    \cline{2-3}
    & Validation & 4,584,062 &  &  &\\
    \cline{2-3}
    & Test & 4,584,062 &  &  &\\
    \hline
    \multirow{3}{*}{Avazu} & Train & 32,343,173 & \multirow{3}{*}{23} & \multirow{3}{*}{1,544,257} & \multirow{3}{*}{\char`\~ 1:5}\\
    \cline{2-3}
    & Validation & 4,042,897 &  & &\\
    \cline{2-3}
    & Test & 4,042,897 &  & &\\
    \hline
    \end{tabular}
    \caption{Statistics of training, validation and test sets of the Criteo data sets.}
    \label{table:datasets}
\end{table}

\subsection{Experiment Setup}
\label{subsec:implementation}

We have implemented the LR (logistic regression) and all factorization machine models (FM, FwFM, FFM, FvFM and FmFM) with Tensorflow.\footnote{We open-sourced the training code and the feature extraction code at \\ https://github.com/VerizonMedia/FmFM}. We follow the implementation of LR and FM in~\cite{qu2016product}, and implement FFM, FwFM following the work \cite{pan2018fwfm}. 

We evaluate all models performance by AUC (Area Under the ROC Curve) and Log Loss on the test set. It is noticeable that a slightly higher AUC or lower Log Loss at 0.001-level is regarded a significant improvement for CTR prediction task, which has also been pointed out in existing works \cite{autoint, cheng2016wide, wang2017deep}.

For those state-of-the-art models, they are all DNN models and may need more hyper-parameters tuning, we pull their performance (AUC and Log Loss) from their original papers, in order to keep their results optimal. It is fair to compare our results with theirs, since we use more strict data splittings; while their implementations may have slight differences, e.g. feature processing, optimizer (Adam or Adagrad), we list their results just for reference. The Deep \& Cross Network is an exception, since their paper only listed the Log Loss but not AUC. Thus, we implemented their model and got a similar performance. 

\subsection{Performance Comparisons}
\label{subsec:performance_comparison}

In this section we will conduct performance evaluation for FmFM. We will compare it with LR, FM, FwFM and FFM on the two data sets mentioned above. We always use the full size model to compare in the results; that means we don't use any optimization methods mentioned in Section \ref{sec:model_opt}. For the parameters such as regularization coefficient $\lambda$, and learning rate $\eta$ in all models, we select those which lead to the best performance on the validation set and then use them in the evaluation on the test set. Experiment results can be found in Table~\ref{table:banchmark_criteo} for the Criteo data set, and Table~\ref{table:banchmark_avazu} for the Avazu data set. 

\begin{table}[h]
\begin{tabular}{| c | c | c | c | c |} 
    \hline
	\multirow{2}{*}{Models}  & \multicolumn{3}{c|}{AUC} & \multirow{2}{*}{\makecell{Log Loss \\ (Test Set)} }\\
	\cline{2-4}
    & Training & Validation & Test &  \\
	\hline
	LR & 0.7930 & 0.7918 & 0.7917 & 0.4582\\
    \hline
    \hline
	FM & 0.8142 & 0.8075 & 0.8075 & 0.4441\\
	\hline
	FFM & \textbf{0.8230} & 0.8103 & 0.8103 & 0.4414\\
	\hline
	FwFM & 0.8191 & 0.8092 & 0.8092 & 0.4426\\
    \hline
	FvFM(ours) & 0.8192 & 0.8102 & 0.8101 & 0.4417\\
    \hline
	FmFM(ours) & 0.8183 & \textbf{0.8109} & \textbf{0.8109} & \textbf{0.4410}\\
	\hline
	\hline
    Deep \& Cross & 0.8244 & 0.8118 & 0.8118 & 0.4413\\
    \hline
    Wide \& Deep & - & - & 0.7981 & 0.4677\\
    \hline
    DeepFM  & - & - & 0.8007 & 0.4508 \\
    \hline
    xDeepFM & - & - & 0.8052 & 0.4418 \\
    \hline
    AutoInt & - & - & 0.8061 & 0.4454 \\
    \hline
    FiBiNET & - & - & 0.8103 & 0.4423 \\
    \hline
    DeepLight & - & - & \textbf{0.8123} & \textbf{0.4395} \\
    \hline
    \end{tabular}
\caption{Comparison among models on the Criteo CTR data sets.}
\label{table:banchmark_criteo}
\end{table}

\begin{table}[h]
\begin{tabular}{| c | c | c | c | c |} 
    \hline
	\multirow{2}{*}{Models}  & \multicolumn{3}{c|}{AUC} & \multirow{2}{*}{\makecell{Log Loss \\ (Test Set)} }\\
	\cline{2-4}
    & Training & Validation & Test &  \\
	\hline
	LR & 0.7526 & 0.7521 & 0.7517 & 0.3953\\
    \hline
    \hline
	FM & 0.7744 & 0.7696 & 0.7695 & 0.3857\\
	\hline
	FFM & \textbf{0.8012} & 0.7761 & 0.7761 & 0.3826\\
	\hline
	FwFM & 0.7822 & 0.7730 & 0.7731 & 0.3835\\
    \hline
	FvFM(ours) & 0.7836 & 0.7732 & 0.7733 & 0.3834\\
    \hline
	FmFM(ours) & 0.7943 & \textbf{0.7764} & \textbf{0.7763} & \textbf{0.3822}\\
	\hline
	\hline
    Deep \& Cross & 0.8109 & 0.7825 & 0.7826 & 0.3791\\
    \hline
    AutoInt & - & - & 0.7752 & 0.3823 \\
    \hline
    Fi-GNN & - & - & 0.7762 & 0.3825 \\
    \hline
    FGCNN+IPNN & - & - & 0.7883 & 0.3746 \\
    \hline
    DeepLight & - & - & \textbf{0.7897} & \textbf{0.3748} \\
    \hline
    \end{tabular}
\caption{Comparison among models on Avazu CTR data sets.}
\label{table:banchmark_avazu}
\end{table}

We observe that FvFM and FmFM can achieve better performance than LR, FM, and FwFM on both data sets, which is in our expectation. Surprisingly, the FmFM can achieve better performance than FFM constantly in both test sets. As we mentioned before, even though FFM is a model dozens times larger than FmFM, our FmFM model still get the best AUC in the test set among all shallow models. Additionally, if we compare the differences in AUC between training and test, we found that the $\Delta AUC_{FmFM}=0.0074$ is the lowest one among those factorization machine models, which affirms our hypothesis in section \ref{sec:ffm_vs_fmfm}: those low frequency features are also trained well with the help of the interaction matrix, and this mechanism helps FmFM to avoid over-fitting.

\subsection{Embedding Dimension Optimization}
\label{subsec:dim_opt}
In this part we implement the method described in \ref{sec:var_len}, whereby we have a full size model, we can extract its embedding tables for each field, then we utilize a standard PCA dimensionality reduction. Here we do several experiments and compare how the dimensionality reduction impact the model performance, and try to find a trade-off between the model size, speed and its performance.

We use the full size FmFM model from experiment \ref{subsec:performance_comparison} on the Criteo data set to evaluate our metrics. We keep 99\%, 97\%, 95\%, 90\%, 85\% and 80\% variance in PCA dimensionality reduction respectively, and estimate the average embedding dimensions and float operations (FLOPs, with cached intermediate vectors). With the new dimensions setting, we train those FmFM models the second pass respectively, and observe the AUC and Log Loss change in test set.

Table \ref{table:model_dim_opt} shows the summary of experiments. The average embedding dimension was reduced significantly when we keep less variance from PCA: there is only less than 1/2 embedding dimensions and 1/3 computation cost when we keep 95\% variance, while there is no significant change on the model's performance compare to the full size model. Thus, 95\% variance is a good trade-off when we optimize the embedding dimensions in FmFM.

\begin{table}[h]
\centering
	\begin{tabular}{| c | r | r | r | r |}
    \hline
    \makecell{Variance\\ \%} &  \makecell{Emb Dim\\(Average)} & \makecell{FLOPs\\Estimated \#} & \makecell{AUC\\(Test Set)} & \makecell{Log Loss\\(Test Set)}\\
    \hline
    Full & 16(100\%) & 24,531(100\%) & 0.8109 & 0.4410 \\
    \hline
    99\% & 10.56(66.0\%) & 12,884(52.5\%) & 0.8109 & 0.4410 \\
    \hline
    97\% & 8.69(54.3\%) & 10,280(41.9\%) & 0.8107 & 0.4411 \\
    \hline
    \textbf{95\%} & \textbf{7.72(48.2\%)} & \textbf{8,960(36.5\%)} & \textbf{0.8108} & \textbf{0.4411} \\
    \hline
    90\% & 6.26(39.1\%) & 7,202(29.4\%) & 0.8103 & 0.4415 \\
    \hline
    85\% & 3.82(23.9\%) & 4,716(19.2\%) & 0.8084 & 0.4432 \\
    \hline
    80\% & 3.36(21.0\%) & 4,392(17.9\%) & 0.8080 & 0.4436 \\
    \hline
	\end{tabular}
    \caption{Compare among FmFM optimized models with embedding dim optimization, an example of the Criteo Data Set}
    \label{table:model_dim_opt}
\end{table}

Figure \ref{fig:auc_flop} shows these models' performance (in AUC) and their computational complexity (in FLOPs). As a shallow model, the optimized FmFM model gets higher AUC as well as lower FLOPs,  compared with all the baseline models except Deep \& Cross and DeepLight. While its computational cost is much lower than these two complex models which ensembled DNN module and shallow module, its FLOPs is only 1.76\% and 8.78\% of them, respectively. The lower FLOPs makes it preferable when the  computation latency is strictly limited, which is the common scenario in the real-time online ads CTR prediction and recommender systems.

\begin{figure}[h]
  \centering
  \includegraphics[width=\linewidth]{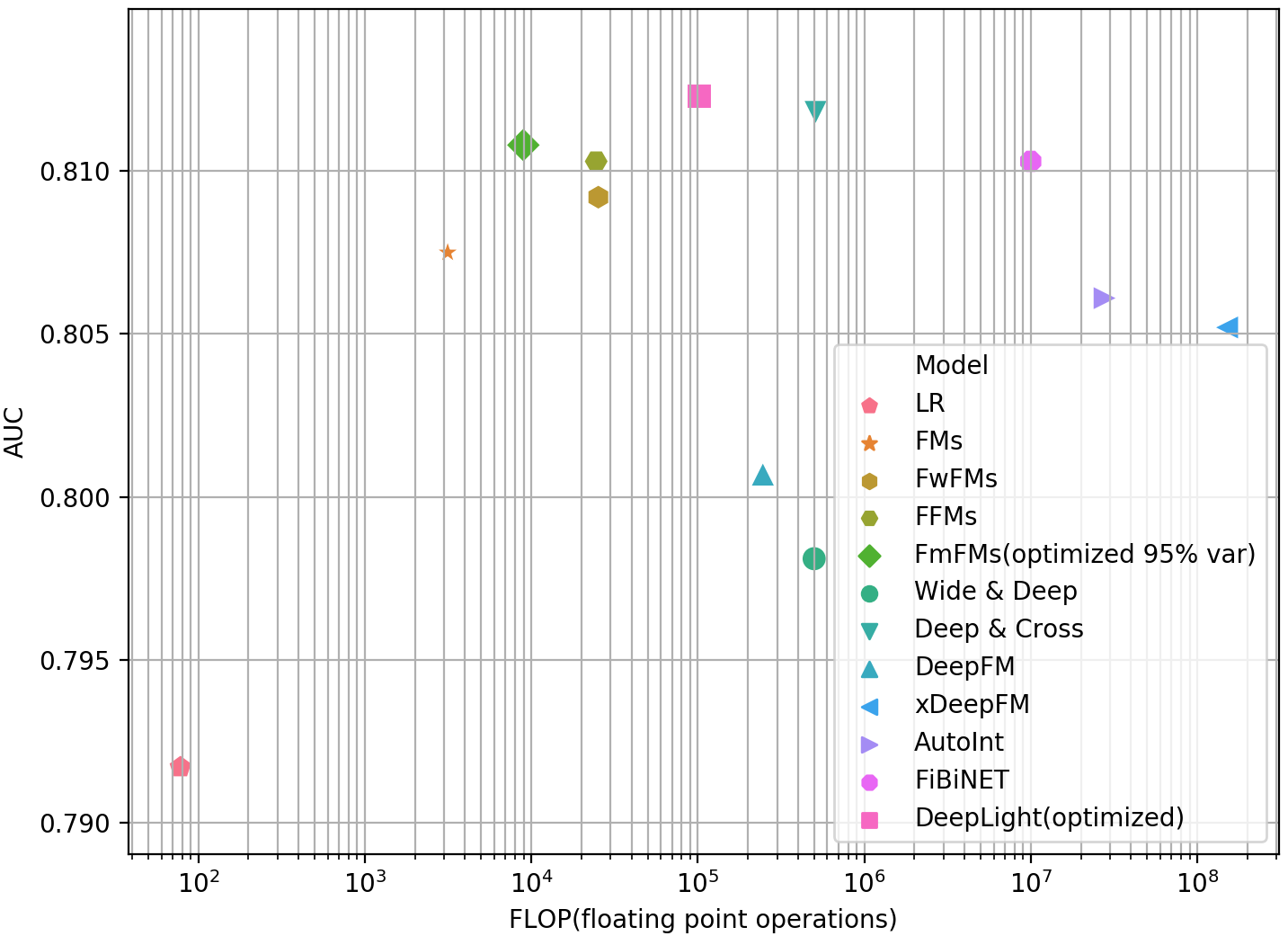}
  \caption{AUC and FLOPs comparison among all models on the Criteo dataset}
  \label{fig:auc_flop}
\end{figure}

\label{subsec:cross_prun}

\section{Conclusion and Further Works}
In conclusion, we propose a novel approach FmFM to model the interactions of field pairs as a matrix. We prove that FmFM is a unified framework of factorization machine model family, in which both FM and FwFM can be treated as special cases. We devise a few optimizations to FmFM, including field-specific embedding dimensions and caching intermediate vectors. These optimizations make the FmFM lightweight and faster during inference, taking only thousands of floating-point operations to make a prediction. We have done comprehensive experiments to verify the effectiveness and efficiency of the proposed model. It achieves state-of-the-art performance among all shallow models, including FM, FFM and FwFM, and its performance is even comparable to those complex DNN models. 

With regard to future work, there are a few potential research directions:
\begin{itemize}
    \item The FmFM is still a linear model, since the field interaction are matrices, and embedding vectors are transformed linearly. We can introduce the non-linear layers to the field interaction and let the model become a non-linear model, which is more flexible.
    \item All the factorization machine models are actually Degree-2 models, which allows up to 2 fields interactions. This restriction is majorly because the dot product. In the future, we can introduce the 3D tensor and allows the 3 fields interaction, or even higher ranks. This work may require more model optimization since there are too much Degree-3 interactions.
    \item We can combine the DNN models like the Wide and Deep~\cite{cheng2016wide}, DeepFM~\cite{guo2017deepfm}, DeepLight~\cite{deng2020sparse}, and try FmFM as a building block in DNN models to further improve their performances. We believe this method should be more competitive in the model performance, when compare to those deep learning based models in Section \ref{sec:related_works}.
\end{itemize}


\bibliographystyle{ACM-Reference-Format}
\bibliography{reference}


\begin{thebibliography}{28}


\ifx \showCODEN    \undefined \def \showCODEN     #1{\unskip}     \fi
\ifx \showDOI      \undefined \def \showDOI       #1{#1}\fi
\ifx \showISBNx    \undefined \def \showISBNx     #1{\unskip}     \fi
\ifx \showISBNxiii \undefined \def \showISBNxiii  #1{\unskip}     \fi
\ifx \showISSN     \undefined \def \showISSN      #1{\unskip}     \fi
\ifx \showLCCN     \undefined \def \showLCCN      #1{\unskip}     \fi
\ifx \shownote     \undefined \def \shownote      #1{#1}          \fi
\ifx \showarticletitle \undefined \def \showarticletitle #1{#1}   \fi
\ifx \showURL      \undefined \def \showURL       {\relax}        \fi
\providecommand\bibfield[2]{#2}
\providecommand\bibinfo[2]{#2}
\providecommand\natexlab[1]{#1}
\providecommand\showeprint[2][]{arXiv:#2}

\bibitem[\protect\citeauthoryear{Aharon, Aizenberg, Bortnikov, Lempel, Adadi,
  Benyamini, Levin, Roth, and Serfaty}{Aharon et~al\mbox{.}}{2013}]%
        {aharon2013off}
\bibfield{author}{\bibinfo{person}{Michal Aharon}, \bibinfo{person}{Natalie
  Aizenberg}, \bibinfo{person}{Edward Bortnikov}, \bibinfo{person}{Ronny
  Lempel}, \bibinfo{person}{Roi Adadi}, \bibinfo{person}{Tomer Benyamini},
  \bibinfo{person}{Liron Levin}, \bibinfo{person}{Ran Roth}, {and}
  \bibinfo{person}{Ohad Serfaty}.} \bibinfo{year}{2013}\natexlab{}.
\newblock \showarticletitle{OFF-set: one-pass factorization of feature sets for
  online recommendation in persistent cold start settings}. In
  \bibinfo{booktitle}{\emph{Proceedings of the 7th ACM Conference on
  Recommender Systems}}. ACM, \bibinfo{pages}{375--378}.
\newblock


\bibitem[\protect\citeauthoryear{Chang, Hsieh, Chang, Ringgaard, and Lin}{Chang
  et~al\mbox{.}}{2010}]%
        {chang2010training}
\bibfield{author}{\bibinfo{person}{Yin-Wen Chang}, \bibinfo{person}{Cho-Jui
  Hsieh}, \bibinfo{person}{Kai-Wei Chang}, \bibinfo{person}{Michael Ringgaard},
  {and} \bibinfo{person}{Chih-Jen Lin}.} \bibinfo{year}{2010}\natexlab{}.
\newblock \showarticletitle{Training and testing low-degree polynomial data
  mappings via linear SVM}.
\newblock \bibinfo{journal}{\emph{Journal of Machine Learning Research}}
  \bibinfo{volume}{11}, \bibinfo{number}{Apr} (\bibinfo{year}{2010}),
  \bibinfo{pages}{1471--1490}.
\newblock


\bibitem[\protect\citeauthoryear{Chapelle, Manavoglu, and Rosales}{Chapelle
  et~al\mbox{.}}{2015}]%
        {chapelle2015simple}
\bibfield{author}{\bibinfo{person}{Olivier Chapelle}, \bibinfo{person}{Eren
  Manavoglu}, {and} \bibinfo{person}{Romer Rosales}.}
  \bibinfo{year}{2015}\natexlab{}.
\newblock \showarticletitle{Simple and scalable response prediction for display
  advertising}.
\newblock \bibinfo{journal}{\emph{ACM Transactions on Intelligent Systems and
  Technology (TIST)}} \bibinfo{volume}{5}, \bibinfo{number}{4}
  (\bibinfo{year}{2015}), \bibinfo{pages}{61}.
\newblock


\bibitem[\protect\citeauthoryear{Cheng, Koc, Harmsen, Shaked, Chandra, Aradhye,
  Anderson, Corrado, Chai, Ispir, et~al\mbox{.}}{Cheng et~al\mbox{.}}{2016}]%
        {cheng2016wide}
\bibfield{author}{\bibinfo{person}{Heng-Tze Cheng}, \bibinfo{person}{Levent
  Koc}, \bibinfo{person}{Jeremiah Harmsen}, \bibinfo{person}{Tal Shaked},
  \bibinfo{person}{Tushar Chandra}, \bibinfo{person}{Hrishi Aradhye},
  \bibinfo{person}{Glen Anderson}, \bibinfo{person}{Greg Corrado},
  \bibinfo{person}{Wei Chai}, \bibinfo{person}{Mustafa Ispir}, {et~al\mbox{.}}}
  \bibinfo{year}{2016}\natexlab{}.
\newblock \showarticletitle{Wide \& deep learning for recommender systems}. In
  \bibinfo{booktitle}{\emph{Proceedings of the 1st Workshop on Deep Learning
  for Recommender Systems}}. ACM, \bibinfo{pages}{7--10}.
\newblock


\bibitem[\protect\citeauthoryear{Deng, Pan, Zhou, Flores, and Lin}{Deng
  et~al\mbox{.}}{2020}]%
        {deng2020sparse}
\bibfield{author}{\bibinfo{person}{Wei Deng}, \bibinfo{person}{Junwei Pan},
  \bibinfo{person}{Tian Zhou}, \bibinfo{person}{Aaron Flores}, {and}
  \bibinfo{person}{Guang Lin}.} \bibinfo{year}{2020}\natexlab{}.
\newblock \showarticletitle{DeepLight: Deep Lightweight Feature Interactions
  for Accelerating CTR Predictions in Ad Serving}.
\newblock \bibinfo{journal}{\emph{Proceedings of the 14th ACM international
  conference on Web search and data mining}} (\bibinfo{year}{2020}).
\newblock


\bibitem[\protect\citeauthoryear{Deng, Goldberg, Sun, Tang, and Zhang}{Deng
  et~al\mbox{.}}{2017}]%
        {deng2017pricing}
\bibfield{author}{\bibinfo{person}{Xiaotie Deng}, \bibinfo{person}{Paul
  Goldberg}, \bibinfo{person}{Yang Sun}, \bibinfo{person}{Bo Tang}, {and}
  \bibinfo{person}{Jinshan Zhang}.} \bibinfo{year}{2017}\natexlab{}.
\newblock \showarticletitle{Pricing ad slots with consecutive multi-unit
  demand}.
\newblock \bibinfo{journal}{\emph{Autonomous Agents and Multi-Agent Systems}}
  \bibinfo{volume}{31}, \bibinfo{number}{3} (\bibinfo{year}{2017}),
  \bibinfo{pages}{584--605}.
\newblock


\bibitem[\protect\citeauthoryear{Guo, Tang, Ye, Li, and He}{Guo
  et~al\mbox{.}}{2017}]%
        {guo2017deepfm}
\bibfield{author}{\bibinfo{person}{Huifeng Guo}, \bibinfo{person}{Ruiming
  Tang}, \bibinfo{person}{Yunming Ye}, \bibinfo{person}{Zhenguo Li}, {and}
  \bibinfo{person}{Xiuqiang He}.} \bibinfo{year}{2017}\natexlab{}.
\newblock \showarticletitle{DeepFM: A Factorization-Machine based Neural
  Network for CTR Prediction}.
\newblock \bibinfo{journal}{\emph{arXiv preprint arXiv:1703.04247}}
  (\bibinfo{year}{2017}).
\newblock


\bibitem[\protect\citeauthoryear{He and Chua}{He and Chua}{2017}]%
        {he2017neural}
\bibfield{author}{\bibinfo{person}{Xiangnan He} {and} \bibinfo{person}{Tat-Seng
  Chua}.} \bibinfo{year}{2017}\natexlab{}.
\newblock \showarticletitle{Neural Factorization Machines for Sparse Predictive
  Analytics}.
\newblock  (\bibinfo{year}{2017}).
\newblock


\bibitem[\protect\citeauthoryear{Juan, Lefortier, and Chapelle}{Juan
  et~al\mbox{.}}{2017}]%
        {juan2017field}
\bibfield{author}{\bibinfo{person}{Yuchin Juan}, \bibinfo{person}{Damien
  Lefortier}, {and} \bibinfo{person}{Olivier Chapelle}.}
  \bibinfo{year}{2017}\natexlab{}.
\newblock \showarticletitle{Field-aware factorization machines in a real-world
  online advertising system}. In \bibinfo{booktitle}{\emph{Proceedings of the
  26th International Conference on World Wide Web Companion}}. International
  World Wide Web Conferences Steering Committee, \bibinfo{pages}{680--688}.
\newblock


\bibitem[\protect\citeauthoryear{Juan, Zhuang, Chin, and Lin}{Juan
  et~al\mbox{.}}{2016}]%
        {juan2016field}
\bibfield{author}{\bibinfo{person}{Yuchin Juan}, \bibinfo{person}{Yong Zhuang},
  \bibinfo{person}{Wei-Sheng Chin}, {and} \bibinfo{person}{Chih-Jen Lin}.}
  \bibinfo{year}{2016}\natexlab{}.
\newblock \showarticletitle{Field-aware factorization machines for CTR
  prediction}. In \bibinfo{booktitle}{\emph{Proceedings of the 10th ACM
  Conference on Recommender Systems}}. ACM, \bibinfo{pages}{43--50}.
\newblock


\bibitem[\protect\citeauthoryear{Koren, Bell, and Volinsky}{Koren
  et~al\mbox{.}}{2009}]%
        {koren2009matrix}
\bibfield{author}{\bibinfo{person}{Yehuda Koren}, \bibinfo{person}{Robert
  Bell}, {and} \bibinfo{person}{Chris Volinsky}.}
  \bibinfo{year}{2009}\natexlab{}.
\newblock \showarticletitle{Matrix factorization techniques for recommender
  systems}.
\newblock \bibinfo{journal}{\emph{Computer}} \bibinfo{volume}{42},
  \bibinfo{number}{8} (\bibinfo{year}{2009}).
\newblock


\bibitem[\protect\citeauthoryear{Labs}{Labs}{2014}]%
        {criteo-display-ad-challenge}
\bibfield{author}{\bibinfo{person}{Criteo Labs}.}
  \bibinfo{year}{2014}\natexlab{}.
\newblock \bibinfo{booktitle}{\emph{Display Advertising Challenge}}.
\newblock
\urldef\tempurl%
\url{https://www.kaggle.com/c/criteo-display-ad-challenge}
\showURL{%
\tempurl}


\bibitem[\protect\citeauthoryear{Lian, Zhou, Zhang, Chen, Xie, and Sun}{Lian
  et~al\mbox{.}}{2018}]%
        {xdeepfm}
\bibfield{author}{\bibinfo{person}{Jianxun Lian}, \bibinfo{person}{Xiaohuan
  Zhou}, \bibinfo{person}{Fuzheng Zhang}, \bibinfo{person}{Zhongxia Chen},
  \bibinfo{person}{Xing Xie}, {and} \bibinfo{person}{Guangzhong Sun}.}
  \bibinfo{year}{2018}\natexlab{}.
\newblock \showarticletitle{xDeepFM: Combining Explicit and Implicit Feature
  Interactions for Recommender Systems}.
\newblock \bibinfo{journal}{\emph{2018 ACM SIGKDD International Conference on
  Knowledge Discovery and Data Mining}}.
\newblock


\bibitem[\protect\citeauthoryear{McMahan, Holt, Sculley, Young, Ebner, Grady,
  Nie, Phillips, Davydov, Golovin, et~al\mbox{.}}{McMahan
  et~al\mbox{.}}{2013}]%
        {mcmahan2013ad}
\bibfield{author}{\bibinfo{person}{H~Brendan McMahan}, \bibinfo{person}{Gary
  Holt}, \bibinfo{person}{David Sculley}, \bibinfo{person}{Michael Young},
  \bibinfo{person}{Dietmar Ebner}, \bibinfo{person}{Julian Grady},
  \bibinfo{person}{Lan Nie}, \bibinfo{person}{Todd Phillips},
  \bibinfo{person}{Eugene Davydov}, \bibinfo{person}{Daniel Golovin},
  {et~al\mbox{.}}} \bibinfo{year}{2013}\natexlab{}.
\newblock \showarticletitle{Ad click prediction: a view from the trenches}. In
  \bibinfo{booktitle}{\emph{Proceedings of the 19th ACM SIGKDD international
  conference on Knowledge discovery and data mining}}. ACM,
  \bibinfo{pages}{1222--1230}.
\newblock


\bibitem[\protect\citeauthoryear{Pan, Mao, Ruiz, Sun, and Flores}{Pan
  et~al\mbox{.}}{2019}]%
        {pan2019predicting}
\bibfield{author}{\bibinfo{person}{Junwei Pan}, \bibinfo{person}{Yizhi Mao},
  \bibinfo{person}{Alfonso~Lobos Ruiz}, \bibinfo{person}{Yu Sun}, {and}
  \bibinfo{person}{Aaron Flores}.} \bibinfo{year}{2019}\natexlab{}.
\newblock \showarticletitle{Predicting different types of conversions with
  multi-task learning in online advertising}. In
  \bibinfo{booktitle}{\emph{Proceedings of the 25th ACM SIGKDD International
  Conference on Knowledge Discovery \& Data Mining}}.
  \bibinfo{pages}{2689--2697}.
\newblock


\bibitem[\protect\citeauthoryear{Pan, Xu, Ruiz, Zhao, Pan, Sun, and Lu}{Pan
  et~al\mbox{.}}{2018}]%
        {pan2018fwfm}
\bibfield{author}{\bibinfo{person}{Junwei Pan}, \bibinfo{person}{Jian Xu},
  \bibinfo{person}{Alfonso~Lobos Ruiz}, \bibinfo{person}{Wenliang Zhao},
  \bibinfo{person}{Shengjun Pan}, \bibinfo{person}{Yu Sun}, {and}
  \bibinfo{person}{Quan Lu}.} \bibinfo{year}{2018}\natexlab{}.
\newblock \showarticletitle{Field-weighted Factorization Machines for
  Click-Through Rate Prediction in Display Advertising}.
\newblock \bibinfo{journal}{\emph{Proceedings of the 2018 World Wide Web
  Conference on World Wide Web - WWW ’18}} (\bibinfo{year}{2018}).
\newblock
\showISBNx{9781450356398}
\urldef\tempurl%
\url{https://doi.org/10.1145/3178876.3186040}
\showDOI{\tempurl}


\bibitem[\protect\citeauthoryear{Qu, Cai, Ren, Zhang, Yu, Wen, and Wang}{Qu
  et~al\mbox{.}}{2016}]%
        {qu2016product}
\bibfield{author}{\bibinfo{person}{Yanru Qu}, \bibinfo{person}{Han Cai},
  \bibinfo{person}{Kan Ren}, \bibinfo{person}{Weinan Zhang},
  \bibinfo{person}{Yong Yu}, \bibinfo{person}{Ying Wen}, {and}
  \bibinfo{person}{Jun Wang}.} \bibinfo{year}{2016}\natexlab{}.
\newblock \showarticletitle{Product-based neural networks for user response
  prediction}. In \bibinfo{booktitle}{\emph{2016 IEEE 16th International
  Conference on Data Mining (ICDM)}}. IEEE, \bibinfo{pages}{1149--1154}.
\newblock


\bibitem[\protect\citeauthoryear{Rendle}{Rendle}{2010}]%
        {rendle2010factorization}
\bibfield{author}{\bibinfo{person}{Steffen Rendle}.}
  \bibinfo{year}{2010}\natexlab{}.
\newblock \showarticletitle{Factorization machines}. In
  \bibinfo{booktitle}{\emph{Data Mining (ICDM), 2010 IEEE 10th International
  Conference on}}. IEEE, \bibinfo{pages}{995--1000}.
\newblock


\bibitem[\protect\citeauthoryear{Rendle}{Rendle}{2012}]%
        {rendle2012factorization}
\bibfield{author}{\bibinfo{person}{Steffen Rendle}.}
  \bibinfo{year}{2012}\natexlab{}.
\newblock \showarticletitle{Factorization machines with libfm}.
\newblock \bibinfo{journal}{\emph{ACM Transactions on Intelligent Systems and
  Technology (TIST)}} \bibinfo{volume}{3}, \bibinfo{number}{3}
  (\bibinfo{year}{2012}), \bibinfo{pages}{57}.
\newblock


\bibitem[\protect\citeauthoryear{Richardson, Dominowska, and Ragno}{Richardson
  et~al\mbox{.}}{2007}]%
        {richardson2007predicting}
\bibfield{author}{\bibinfo{person}{Matthew Richardson}, \bibinfo{person}{Ewa
  Dominowska}, {and} \bibinfo{person}{Robert Ragno}.}
  \bibinfo{year}{2007}\natexlab{}.
\newblock \showarticletitle{Predicting clicks: estimating the click-through
  rate for new ads}. In \bibinfo{booktitle}{\emph{Proceedings of the 16th
  international conference on World Wide Web}}. ACM, \bibinfo{pages}{521--530}.
\newblock


\bibitem[\protect\citeauthoryear{Shan, Hoens, Jiao, Wang, Yu, and Mao}{Shan
  et~al\mbox{.}}{2016}]%
        {shan2016deep}
\bibfield{author}{\bibinfo{person}{Ying Shan}, \bibinfo{person}{T~Ryan Hoens},
  \bibinfo{person}{Jian Jiao}, \bibinfo{person}{Haijing Wang},
  \bibinfo{person}{Dong Yu}, {and} \bibinfo{person}{JC Mao}.}
  \bibinfo{year}{2016}\natexlab{}.
\newblock \showarticletitle{Deep Crossing: Web-scale modeling without manually
  crafted combinatorial features}. In \bibinfo{booktitle}{\emph{Proceedings of
  the 22nd ACM SIGKDD International Conference on Knowledge Discovery and Data
  Mining}}. ACM, \bibinfo{pages}{255--262}.
\newblock


\bibitem[\protect\citeauthoryear{Song, Shi, Xiao, Duan, Xu, Zhang, and
  Tang}{Song et~al\mbox{.}}{2018}]%
        {autoint}
\bibfield{author}{\bibinfo{person}{Weiping Song}, \bibinfo{person}{Chence Shi},
  \bibinfo{person}{Zhiping Xiao}, \bibinfo{person}{Zhijian Duan},
  \bibinfo{person}{Yewen Xu}, \bibinfo{person}{Ming Zhang}, {and}
  \bibinfo{person}{Jian Tang}.} \bibinfo{year}{2018}\natexlab{}.
\newblock \showarticletitle{AutoInt: Automatic Feature Interaction Learning via
  Self-Attentive Neural Networks}.
\newblock \bibinfo{journal}{\emph{CIKM '19: Proceedings of the 28th ACM
  International Conference on Information and Knowledge Management}}
  (\bibinfo{year}{2018}).
\newblock


\bibitem[\protect\citeauthoryear{Sun, Zhou, Yin, and Deng}{Sun
  et~al\mbox{.}}{2012}]%
        {sun2012convergence}
\bibfield{author}{\bibinfo{person}{Yang Sun}, \bibinfo{person}{Yunhong Zhou},
  \bibinfo{person}{Ming Yin}, {and} \bibinfo{person}{Xiaotie Deng}.}
  \bibinfo{year}{2012}\natexlab{}.
\newblock \showarticletitle{On the convergence and robustness of reserve
  pricing in keyword auctions}. In \bibinfo{booktitle}{\emph{Proceedings of the
  14th Annual International Conference on Electronic Commerce}}.
  \bibinfo{pages}{113--120}.
\newblock


\bibitem[\protect\citeauthoryear{Wang, Fu, Fu, and Wang}{Wang
  et~al\mbox{.}}{2017}]%
        {wang2017deep}
\bibfield{author}{\bibinfo{person}{Ruoxi Wang}, \bibinfo{person}{Bin Fu},
  \bibinfo{person}{Gang Fu}, {and} \bibinfo{person}{Mingliang Wang}.}
  \bibinfo{year}{2017}\natexlab{}.
\newblock \showarticletitle{Deep \& Cross Network for Ad Click Predictions}.
\newblock \bibinfo{journal}{\emph{arXiv preprint arXiv:1708.05123}}
  (\bibinfo{year}{2017}).
\newblock


\bibitem[\protect\citeauthoryear{Yang, Yunhong, and Xiaotie}{Yang
  et~al\mbox{.}}{2011}]%
        {yang2011optimal}
\bibfield{author}{\bibinfo{person}{Sun Yang}, \bibinfo{person}{Zhou Yunhong},
  {and} \bibinfo{person}{Deng Xiaotie}.} \bibinfo{year}{2011}\natexlab{}.
\newblock \showarticletitle{Optimal Reserve Prices in Weighted GSP Auctions:
  Theory and Experimental Methodology}. In \bibinfo{booktitle}{\emph{7th Ad
  Auction Workshop in conjunction with the 12th ACM Conference on Electronic
  Commerce}}.
\newblock


\bibitem[\protect\citeauthoryear{Zhang, Du, and Wang}{Zhang
  et~al\mbox{.}}{2016}]%
        {zhang2016deep}
\bibfield{author}{\bibinfo{person}{Weinan Zhang}, \bibinfo{person}{Tianming
  Du}, {and} \bibinfo{person}{Jun Wang}.} \bibinfo{year}{2016}\natexlab{}.
\newblock \showarticletitle{Deep learning over multi-field categorical data}.
  In \bibinfo{booktitle}{\emph{European conference on information retrieval}}.
  Springer, \bibinfo{pages}{45--57}.
\newblock


\bibitem[\protect\citeauthoryear{Zhou, Fan, Cui, Bian, Zhu, and Gai}{Zhou
  et~al\mbox{.}}{2018a}]%
        {zhou2018rocket}
\bibfield{author}{\bibinfo{person}{Guorui Zhou}, \bibinfo{person}{Ying Fan},
  \bibinfo{person}{Runpeng Cui}, \bibinfo{person}{Weijie Bian},
  \bibinfo{person}{Xiaoqiang Zhu}, {and} \bibinfo{person}{Kun Gai}.}
  \bibinfo{year}{2018}\natexlab{a}.
\newblock \showarticletitle{Rocket launching: A universal and efficient
  framework for training well-performing light net}. In
  \bibinfo{booktitle}{\emph{Proceedings of the AAAI Conference on Artificial
  Intelligence}}, Vol.~\bibinfo{volume}{32}.
\newblock


\bibitem[\protect\citeauthoryear{Zhou, Zhu, Song, Fan, Zhu, Ma, Yan, Jin, Li,
  and Gai}{Zhou et~al\mbox{.}}{2018b}]%
        {zhou2018deep}
\bibfield{author}{\bibinfo{person}{Guorui Zhou}, \bibinfo{person}{Xiaoqiang
  Zhu}, \bibinfo{person}{Chenru Song}, \bibinfo{person}{Ying Fan},
  \bibinfo{person}{Han Zhu}, \bibinfo{person}{Xiao Ma},
  \bibinfo{person}{Yanghui Yan}, \bibinfo{person}{Junqi Jin},
  \bibinfo{person}{Han Li}, {and} \bibinfo{person}{Kun Gai}.}
  \bibinfo{year}{2018}\natexlab{b}.
\newblock \showarticletitle{Deep interest network for click-through rate
  prediction}. In \bibinfo{booktitle}{\emph{Proceedings of the 24th ACM SIGKDD
  International Conference on Knowledge Discovery \& Data Mining}}.
  \bibinfo{pages}{1059--1068}.
\newblock


\end{thebibliography}

\appendix
\section{Math Proof}
\begin{lemma}
Given two row vectors $v_i$ and $v_j$ whose lengths are $k$ and $l$ respectively, there is a matrix $M\in \mathbb{R}^{k,l}$, then:

\begin{equation}
\bm{v}_i\times M\cdot \bm{v}_j = \bm{v}_j\times M^T\cdot \bm{v}_i
\end{equation}

where $\times$ denotes matrix multiplication, and $\cdot$ denotes dot product.
\end{lemma}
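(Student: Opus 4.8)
The plan is to verify the identity by expanding both sides into the same double sum over the entries of $M$. First I would fix notation: write $\bm{v}_i = (v_i^1, \dots, v_i^k)$ and $\bm{v}_j = (v_j^1, \dots, v_j^l)$ as row vectors, and let $M = (M_{ab})$ have row index $a \in \{1,\dots,k\}$ and column index $b \in \{1,\dots,l\}$, so that $M^T$ has entries $(M^T)_{ba} = M_{ab}$. Note that $\bm{v}_i \times M$ is a $1 \times l$ row vector while $\bm{v}_j \times M^T$ is a $1 \times k$ row vector, so both sides of the claimed equation are genuine scalars; keeping these shapes straight (since $M$ need not be square) is the one bookkeeping point that deserves attention.

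Next I would compute the left-hand side directly. The $b$-th coordinate of $\bm{v}_i \times M$ is $\sum_{a=1}^k v_i^a M_{ab}$, so dotting with $\bm{v}_j$ gives $\bm{v}_i \times M \cdot \bm{v}_j = \sum_{b=1}^l \big(\sum_{a=1}^k v_i^a M_{ab}\big) v_j^b = \sum_{a=1}^k \sum_{b=1}^l v_i^a M_{ab} v_j^b$. Doing the same on the right, the $a$-th coordinate of $\bm{v}_j \times M^T$ is $\sum_{b=1}^l v_j^b (M^T)_{ba} = \sum_{b=1}^l v_j^b M_{ab}$, so $\bm{v}_j \times M^T \cdot \bm{v}_i = \sum_{a=1}^k \sum_{b=1}^l v_j^b M_{ab} v_i^a$. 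The two double sums coincide term by term because scalar multiplication is commutative, which proves the lemma.

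As a slicker alternative I would observe that $\bm{v}_i \times M \cdot \bm{v}_j$, viewed as a $1 \times 1$ matrix, equals its own transpose; applying $(ABC)^T = C^T B^T A^T$ and using that a row vector transposed twice is itself turns this expression into $\bm{v}_j \times M^T \cdot \bm{v}_i$ at once. Either way, there is no substantive obstacle here: the statement is immediate from the definitions of matrix multiplication and dot product, and the only thing that requires care is confirming that the index ranges in the two expansions genuinely match.
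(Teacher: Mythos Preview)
Your proposal is correct. The ``slicker alternative'' you give at the end is exactly the paper's proof: rewrite the dot product as a matrix product so the expression is a $1\times 1$ matrix, take its transpose using $(ABC)^T = C^T B^T A^T$, and conclude. Your primary argument via coordinate expansion into the double sum $\sum_{a,b} v_i^a M_{ab} v_j^b$ is an equally valid and more explicit route that the paper does not spell out; it has the minor advantage of making transparent why the non-square shape of $M$ causes no trouble, at the cost of a little more bookkeeping.
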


\begin{proof}
Since $\bm{v}_i\times M\cdot \bm{v}_j$ is a scalar, we denote it as $a$, and the dot product can be rewrite to a matrix multiplication, we rewrite the left of the equation:
\begin{equation}
    a = \bm{v}_i\times M\cdot \bm{v}_j = \bm{v}_i\times M \times \bm{v}^T_j
\end{equation}

while the transpose of a scalar equals to itself:
\begin{equation}
    a^T = (\bm{v}_i\times M \times \bm{v}^T_j)^T = \bm{v}_j\times M^T\times \bm{v}^T_i =\bm{v}_j\times M^T\cdot \bm{v}_i= a
\end{equation}

Hence, the left equals the right.
\end{proof}

\end{document}